%% file: paper1.tex
\documentclass[11pt]{article}
\usepackage{arxiv}
\usepackage{times}
\usepackage{microtype}
\usepackage{graphicx}
\usepackage{booktabs}
\usepackage{tabularx}
\usepackage{longtable}
\usepackage{amsmath,amssymb,amsthm}
\usepackage{enumitem}
\usepackage{xcolor}
\usepackage{hyperref}
\usepackage{tikz}
\usetikzlibrary{arrows.meta,positioning,calc,fit,backgrounds}
\usepackage{caption}
\usepackage{float}
\usepackage{fancyhdr}
\usepackage{titlesec}
\usepackage{url}
\usepackage{docmute}

\definecolor{ink}{HTML}{1B1B1B}
\definecolor{accent}{HTML}{8D3B56}
\definecolor{soft}{HTML}{F4F1EE}
\definecolor{blue}{HTML}{315A7D}
\definecolor{graytext}{HTML}{555555}
\hypersetup{
  colorlinks=true,
  linkcolor=ink,
  citecolor=accent,
  urlcolor=accent,
  pdftitle={From Product-Centred Retrieval to Experience-Led Commerce},
  pdfauthor={Nafiul I. Khan, Mansura Habiba, Rafflesia Khan}
}
\setlist{leftmargin=*,topsep=0.3em,itemsep=0.15em}
\titleformat{\section}{\large\bfseries}{\thesection.}{0.5em}{}
\titleformat{\subsection}{\normalsize\bfseries}{\thesubsection.}{0.5em}{}

\newtheorem{proposition}{Proposition}

\theoremstyle{definition}

\title{\textbf{From Product-Centred Retrieval to Experience-Led Commerce:}\\
Twelve Candidate Design Principles for Fashion E-Commerce User Experience}
\author{ 
    Nafiul I. Khan \\
	Individual Researcher \\
	Khulna, Bangladesh \\
	\texttt{earthkhan01@gmail.com} \\
	\AND
     \href{https://orcid.org/0000-0001-9051-1370}{\includegraphics[scale=0.06]{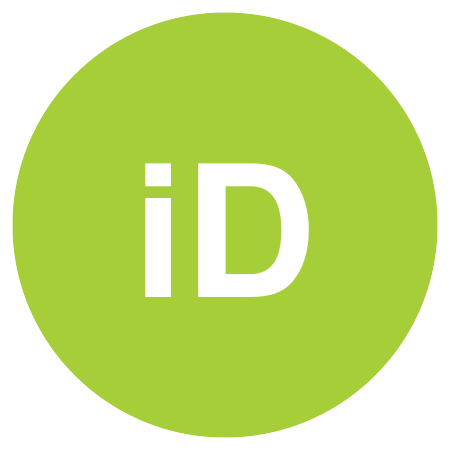}\hspace{1mm}Mansura Habiba} \\
    	Principal Platform Architect - Agentic AI\\
    	IBM Software\\
    	Dublin, Ireland \\
    	\texttt{mansura.habiba@gmail.com} \\
   \And
    Rafflesia Khan\\
	Software Developer\\
	IBM Software\\
	Dublin, Ireland \\
	\texttt{khan.rafflesia@gmail.com} \\
}
\date{}

\begin{document}
\maketitle
\def\maketitle{}
\input{paper1-source.tex}

\end{document}

%% file: paper1-source.tex
\maketitle

\begin{abstract}
Online shopping is now a routine retail channel, but the dominant storefront still assumes that shopping begins with a product query and proceeds through an individual listing to checkout. In fashion e-commerce, demand is increasingly formed elsewhere: through creator and social-media content, short video and represented looks, upcoming situations, owned garments, saved plans, values, and partially authored outfit compositions. Search, recommendation, and generative assistance can reduce retrieval and comparison effort, but they can also collapse hard requirements into inferred preference, amplify commercial priorities, obscure whether a represented item corresponds to the current seller's product, and present a seemingly suitable candidate before stock and delivery feasibility are established. The resulting problem is not only product discovery but \emph{representational discontinuity}: the storefront retains a candidate product while discarding the source, relations, constraints, evidence, and reasons that made it relevant.

This paper proposes twelve candidate Experience-Led Commerce design principles for high-constraint, relational fashion e-commerce, surfaced through design-led induction while building VogueDrop, a multi-vendor prototype. The principles address multi-entry discovery, experience continuity, relational exploration, preference sovereignty, evidence-scoped correspondence, recommendation-time feasibility, customer-compatible commercial ranking, adaptive but stable workspaces, attributable transaction authority, outcome-linked learning, shared composition authoring, and accountable human--agent handoff. The paper uses fashion as an intentionally bounded domain in which fit, composition, material, identity-sensitive preference, seller fragmentation, visual correspondence, and delivery timing make the interaction breakdown observable; it does not claim universal applicability across e-commerce. Each candidate principle is paired with a prespecified hypothesis, primary behavioural outcome, and rejection condition. A formative critical-incident study and a preregistered matched-interface experiment are specified, with user-experience and platform-facing outcomes reported separately. No empirical superiority is claimed before those studies are completed.
\end{abstract}

\noindent\textbf{Keywords:} fashion e-commerce; user experience; social commerce; design principles; recommendation; product correspondence; composition authoring; human--AI interaction

\section{Introduction}

Online shopping is now a routine retail channel rather than an exceptional alternative. The more consequential shift is in how shopping begins. Demand is increasingly formed across creator posts, social feeds, short video, represented looks, recommendations, saved collections, and personal planning rather than inside the final storefront. Social-commerce research connects social interaction, information exchange, and creator influence to consumer decision processes, while also showing that exposure does not automatically produce purchase \cite{zhang2016social,hu2022social}. In a 2024 United States survey, 62\% of adult TikTok users reported using the service for product reviews or recommendations \cite{faverio2024}. The commercial problem has therefore moved beyond attracting a shopper to a catalogue: the storefront must receive a shopping state that may already contain a narrative, an expectation, a relationship among items, or uncertainty inherited from another platform.

Fashion makes this change especially visible. A shopper may still seek a named product, but may instead try to reproduce a look encountered in creator or entertainment content, adapt a style around an owned garment, prepare for an occasion or journey, satisfy material and value constraints, or author a complete outfit before selecting its products. These are high-constraint, relational tasks. Their success depends not only on the attributes of an individual listing but also on compatibility among items, fit and body evidence, the role of owned products, seller correspondence, budget, stock, return conditions, and delivery timing. Fashion recommendation research addresses compatibility and fit \cite{deldjoo2023,nestler2021}; the interface question is how those relations and constraints remain available across discovery, comparison, and commitment.

Algorithmic mediation sharpens rather than removes this problem. Search, recommendation, and generative assistance can reduce retrieval and comparison effort and can surface candidates that a shopper might not find manually. The same mechanisms can also merge customer-authored requirements, temporary taste, inferred preference, sponsorship, margin, and inventory pressure into one ranking process. Multi-stakeholder recommendation recognises that customer, provider, and platform objectives differ \cite{abdollahpouri2017}; in a consequential shopping task, commercial value cannot legitimately compensate for a failed hard requirement. A visually persuasive candidate may match the desired style while remaining wrong in material, seller, product identity, availability, or delivery feasibility.

The prevailing fashion storefront nevertheless remains organised around the category, query, product, basket, and order. This structure is efficient when the shopper's problem has already been translated into catalogue attributes. It is structurally incomplete when the point of entry is a creator story, a short video, an owned garment, an upcoming journey, or a remembered composition. In those cases, the initial object is not merely a product but a situated representation containing an origin, a reason for interest, relationships among items, explicit or latent constraints, and varying degrees of evidential support.

The loss occurs at the boundary between discovery and transaction. A creator's account of a complete look is reduced to one or more links; a situational requirement becomes a sequence of filters; and a represented garment becomes indistinguishable from visually similar inventory supplied by another merchant. The product page may remain internally accurate while the shopping experience becomes externally incoherent. The shopper must then reopen the source, remember why an item mattered, reconstruct which products belonged together, re-enter requirements, and infer whether the current listing corresponds to what was represented. We call this failure \emph{representational discontinuity}: a transition in which the commerce system preserves the candidate product but discards task-relevant structure that gave the candidate meaning.

This paper therefore does not propose a universal replacement for e-commerce. It uses fashion e-commerce as a bounded research domain because fit, composition, identity-sensitive preference, material constraints, seller fragmentation, visual correspondence, and delivery deadlines expose the limits of catalogue-first interaction particularly clearly. The problem is neither to replace the product grid with conversation nor to attach generative AI to an unchanged funnel. It is to construct an interaction architecture in which heterogeneous beginnings can remain semantically continuous with comparison and commitment, while direct product retrieval remains efficient. The architectural contribution must remain valid whether interpretation is implemented manually, through rules, or by mixed-initiative computation.

We therefore propose \emph{Experience-Led Commerce}, hereafter \emph{our proposed ELC}, instantiated in VogueDrop through twelve candidate user-experience design principles generated through iterative design critique. Our proposed ELC treats social, situational, saved, compositional, and direct-retrieval entries as typed projections of a shared experience state whose provenance, constraints, relations, evidence, feasibility, authority, compositions, delegations, and outcomes remain inspectable. Figure~\ref{fig:shift} contrasts this architecture with the page funnel. The essential change is not an additional navigation option; it is that the fashion-commerce experience is designed around the state from which the customer or an authorised delegate is shopping.

\begin{figure}[H]
\centering
\begin{tikzpicture}[font=\small,node distance=5mm and 7mm]
\tikzset{
old/.style={draw=graytext,rounded corners=2pt,fill=white,minimum width=29mm,minimum height=10mm,text width=27mm,align=center},
entry/.style={draw=ink,rounded corners=2pt,fill=white,minimum width=30mm,minimum height=10mm,text width=28mm,align=center},
hub/.style={draw=accent,very thick,rounded corners=2pt,fill=soft,minimum width=42mm,minimum height=18mm,text width=39mm,align=center},
surface/.style={draw=blue,rounded corners=2pt,fill=white,minimum width=35mm,minimum height=10mm,text width=32mm,align=center},
arr/.style={-{Latex[length=2mm]},thick,draw=ink},
loss/.style={-{Latex[length=2mm]},thick,dashed,draw=accent}
}
\node[old] (social) {External story};
\node[old,right=of social] (landing) {Generic landing};
\node[old,right=of landing] (grid) {Product grid};
\node[old,right=of grid] (checkout) {Checkout};
\draw[loss] (social)--node[above,font=\scriptsize,text=accent]{context loss}(landing);
\draw[arr] (landing)--(grid);
\draw[arr] (grid)--(checkout);
\node[anchor=south west,font=\bfseries] at ([yshift=2mm]social.north west) {Page funnel};

\node[entry,below=14mm of social] (story) {Story or creator};
\node[entry,below=of story] (situation) {Situation or need};
\node[entry,below=of situation] (search) {Search or known item};
\node[entry,below=of search] (saved) {Saved plan or owned item};
\node[hub,right=9mm of situation,yshift=-8mm] (state) {Shared experience state\\source, meaning, constraints, evidence};
\node[surface,right=9mm of state,yshift=18mm] (catalog) {Story Catalog or Browse};
\node[surface,below=of catalog] (compare) {Compare and Plan};
\node[surface,below=of compare] (review) {Evidence and Review};
\node[surface,below=of review] (commit) {Attributable Checkout};
\draw[arr] (story)--(state);
\draw[arr] (situation)--(state);
\draw[arr] (search)--(state);
\draw[arr] (saved)--(state);
\draw[arr] (state)--(catalog);
\draw[arr] (state)--(compare);
\draw[arr] (state)--(review);
\draw[arr] (state)--(commit);
\end{tikzpicture}
\caption{From a page funnel to experience-led commerce. Our proposed ELC preserves entry meaning and provenance while retaining direct product retrieval.}
\label{fig:shift}
\end{figure}

The inquiry is organised by one primary question: \emph{what fashion-commerce interaction architecture can preserve the meaning of heterogeneous shopping entry states without degrading efficient product retrieval?} Three subsidiary questions examine (i) which task-relevant elements are lost at the discovery--commerce boundary, (ii) which representational requirements follow from that loss, and (iii) under which task conditions the resulting architecture improves or harms performance.

The contribution is threefold. First, the paper specifies twelve candidate design principles for fashion e-commerce user experience across discovery, composition, comparison, recommendation, delegated action, transaction, and outcome learning. Second, it makes those principles inspectable in VogueDrop rather than describing them as abstract AI capabilities. Third, it maps every principle to a mechanism, hypothesis, primary outcome, and falsification condition. This structure prevents speed, satisfaction, engagement, creation volume, automation rate, or conversion from standing alone as proof of a better experience.

The remainder of the paper follows the organisation of the broader manuscript. Section~2 positions the contribution against social commerce, retrieval, recommendation, evidence, and HCI research. Section~3 introduces VogueDrop and the change in interaction. Section~4 states the twelve candidate principles, Section~5 supplies the minimum analytical model, and Section~6 defines the evaluation programme. The final sections discuss interpretation, threats to validity, prototype development, ethics, and conclusions.

\section{Related Work and Research Gap}

Our proposed ELC sits at the intersection of research traditions that explain different portions of the shopping journey. This section separates what is already established from the proposed contribution: twelve candidate connected principles for preserving entry meaning, product relations, customer requirements, evidence, feasibility, authority, authored compositions, delegated action, and outcomes across the commerce experience.

\subsection{Social commerce and cross-channel discovery}

Social commerce integrates commercial activity with social interaction, user-generated content, and community influence. Reviews of the field describe social support, trust, information exchange, and social presence as important parts of consumer decision processes \cite{zhang2016social}. Empirical work connects social-media use and perceived informational or social value with purchase intention, while also showing that social exposure does not automatically produce purchase \cite{hu2022social}. In a 2024 United States survey, 62\% of adult TikTok users reported using the service for product reviews or recommendations, compared with 44\% of Instagram users and 37\% of Facebook users \cite{faverio2024}.

These findings establish that discovery occurs outside conventional storefronts. They do not establish how a commerce application should preserve the source narrative, demonstrated composition, creator relationship, or uncertainty after entry. A shoppable story can provide a product link while still discarding the relationship among the story, products actually demonstrated, current substitutes, merchant route, and customer situation.

\subsection{Retrieval, exploration, and continuity}

Information Foraging Theory explains movement through information environments in terms of information scent and expected value \cite{pirolli1999}. Search, filters, categories, and product pages remain efficient when the target is expressible through catalogue attributes. Direct manipulation likewise supports visible objects, reversible actions, and immediate feedback \cite{shneiderman1983}. Our proposed ELC does not replace these mechanisms; it asks how they should coexist with narrative and situational entry, relational planning, evidence inspection, and accountable action.

High-constraint fashion shopping is relational. A candidate may be meaningful only as part of a look, in relation to an owned garment, for a particular occasion, or under a material and deadline requirement. Fashion recommendation research has addressed compatibility, fit, and personalised ranking \cite{deldjoo2023,nestler2021}. The unresolved interface problem is continuity: queries, saved items, product pages, and baskets do not normally preserve why a candidate was retained, which role it serves, or which uncertainty remains.

Multi-stakeholder recommendation additionally recognises that customer, provider, and platform objectives need not coincide \cite{abdollahpouri2017}. The relevant design question is not whether commercial value enters ranking, but whether it is allowed to compensate for a candidate that fails the customer's declared eligibility conditions.

\subsection{Creator narrative and product correspondence}

A representation can be socially persuasive without proving that a current merchant can supply the represented product. Synthetic imagery makes this problem visible, but edited photography, manufacturer imagery, reused media, and stale creator content create the same logical separation. Media origin, demonstrated sample, current listing, seller, and delivered item are different evidence scopes. Studies of generated advertising imagery suggest that disclosure and perceived authenticity affect response \cite{belanche2025,zhang2025}, but an origin label alone cannot establish physical correspondence.

This distinction is particularly important in a story catalogue. Creator credibility is evidence about an experience or narrative; it is not automatically evidence of current stock, material, fit, or seller quality. Our proposed ELC therefore treats story--product correspondence as an inspectable relationship rather than an inferred property of a realistic image.

\subsection{User experience and adaptive human--AI interaction}

ISO~9241-11 defines usability through effectiveness, efficiency, and satisfaction for specified users, goals, and contexts \cite{iso2018}. Task--Technology Fit likewise predicts value when system capability fits the task rather than merely adding functionality \cite{goodhue1995}. Direct manipulation and mixed-initiative research add visible objects, reversible action, feedback, and shared control \cite{shneiderman1983,horvitz1999}; human--AI guidelines further emphasise capability communication, correction, and contextual control \cite{amershi2019}.

These foundations justify neither a universal conversational interface nor a single satisfaction score. Speed, effectiveness, comfort, flexibility, ease, relevance, control, and result quality may move in different directions. VogueDrop therefore treats adaptive assistance as one mechanism within a stable interaction architecture and evaluates the outcome dimensions separately.

\subsection{Agentic commerce infrastructure}

Since this work began, industry has produced protocol-level standards for agent-initiated commerce. Google's Agent Payments Protocol (AP2) uses cryptographically signed mandates to establish authorization, authenticity, and accountability; the workflow analysed by Debi et al. describes an Intent Mandate, Cart Mandate, and Payment Mandate \cite{ap2spec,debi2026whispers}. OpenAI and Stripe's Agentic Commerce Protocol (ACP) standardises agent-initiated checkout around create, retrieve, update, and complete session operations, with cancellation as an additional lifecycle endpoint \cite{acpspec}. Both protocols strengthen what is executed and who authorised it. Neither specifies whether the represented product corresponds to what is delivered, whether a recommendation is feasible at recommendation time, or what evidence and material-change context a person needs to approve a delegated action---the concerns of P5, P6, and P12 respectively. Red-team analysis of AP2 makes the remaining layer explicit: cryptographic guarantees can preserve execution correctness while prompt injection manipulates pre-signature reasoning and mandate construction \cite{debi2026whispers}.

\subsection{Research gap}

The preceding literatures optimise different parts of commerce while assuming different units of analysis. Social-commerce research explains influence and participation; recommender systems rank products; usability research evaluates interaction with a task; and provenance mechanisms qualify individual claims. Their conjunction motivates the following research questions:

\begin{itemize}
\item \textbf{RQ1---Interaction shift.} What task-relevant meaning is lost when social, situational, saved, or relational shopping activity is translated into a product-centred storefront?
\item \textbf{RQ2---Experience quality.} Which changes to the commerce interaction can improve speed, effectiveness, comfort, flexibility, ease, relevance, control, and result quality without degrading efficient known-item retrieval?
\item \textbf{RQ3---Evidence and feasibility.} How should product correspondence, seller evidence, stock, delivery, return, and substitution be represented before a recommendation is presented as suitable?
\item \textbf{RQ4---Preference and commercial authority.} How can customer-authored requirements remain distinct from editable taste, system inference, sponsorship, margin, and inventory objectives?
\item \textbf{RQ5---Delegated interaction.} How can a commerce system expose the same state to humans and authorised agents while preserving evidence, permissions, action boundaries, and timely human intervention?
\item \textbf{RQ6---Validation.} Which observations would support, narrow, or reject each principle of our proposed ELC, and which commercial outcomes require evidence beyond a controlled user study?
\end{itemize}

These questions require an application that makes the inherited interaction and the proposed shift observable. Section~3 therefore reports candidate design needs induced through iterative critique of VogueDrop; Section~4 converts those candidate needs into principles and prespecified developer hypotheses for independent evaluation.

\section{VogueDrop App: Identifying the Interaction Shift}

VogueDrop is used here as a research probe for identifying where the inherited storefront interaction no longer represents the shopping task. Fashion makes the mismatch visible because a customer may begin from a creator story, prepare for an occasion, coordinate with an owned garment, compare fit across bodies, avoid particular materials, or question whether the product represented in an image is supplied by the current seller. A conventional storefront does not necessarily lack the relevant information; it distributes that information across feeds, filters, pages, saved lists, seller records, and customer memory.

The proposed shift is from navigating locally complete pages to manipulating a persistent experience state. Search, categories, product inspection, and checkout remain available, but they are no longer assumed to be sufficient representations of the whole task. Section~3 does not yet state the new principles. It uses VogueDrop to identify which relations must survive the shift if the experience is to become faster, more effective, comfortable, flexible, easy to understand, relevant, controllable, and capable of producing a better result.

\subsection{From page sequence to experience state}

Figure~\ref{fig:surfaces} presents the six primary surfaces. They are not a mandatory funnel. A customer can enter through a story, situational request, saved plan, category, or known-item search; move directly to Browse; compare products in the context of a look; inspect evidence when uncertainty matters; and review the complete transaction before approval.

\begin{figure}[H]
\centering
\begin{tikzpicture}[font=\small,node distance=7mm and 7mm]
\tikzset{
box/.style={draw=ink,rounded corners=2pt,fill=white,minimum width=47mm,minimum height=18mm,text width=42mm,align=left,inner sep=4pt},
key/.style={draw=accent,very thick,rounded corners=2pt,fill=soft,minimum width=47mm,minimum height=18mm,text width=42mm,align=left,inner sep=4pt},
arr/.style={-{Latex[length=2mm]},thick,draw=ink},
back/.style={-{Latex[length=2mm]},thick,dashed,draw=blue}
}
\node[key] (entry) {\textbf{1. Multi-entry home}\\Story Catalog, situation, Browse, search, saved plans.};
\node[box,right=of entry] (story) {\textbf{2. Story and product evidence}\\Creator, sponsorship, shown item, current seller, uncertainty.};
\node[box,right=of story] (look) {\textbf{3. Body and look workspace}\\Owned, proposed, optional, rejected, and substituted items.};
\node[box,below=of look] (decision) {\textbf{4. Decision panel}\\Fit, material, correspondence, seller, delivery, unresolved risk.};
\node[box,left=of decision] (compare) {\textbf{5. Compare and plan}\\Alternatives, requirement pass, reasons, budget, saved state.};
\node[key,left=of compare] (checkout) {\textbf{6. Review and checkout}\\Seller, amount, changes, delivery, initiator, approval.};
\draw[arr] (entry)--(story); \draw[arr] (story)--(look); \draw[arr] (look)--(decision);
\draw[arr] (decision)--(compare); \draw[arr] (compare)--(checkout);
\draw[back] (story.south) to[bend right=18] (entry.south);
\draw[back] (compare.north) to[bend right=18] (story.south);
\end{tikzpicture}
\caption{VogueDrop application structure. The surfaces expose different parts of one shopping state while preserving direct Browse and reversible movement.}
\label{fig:surfaces}
\end{figure}

\subsection{Change in user interaction}

Consider a customer preparing for a three-day business trip to Dublin, where rain is expected. She begins with a creator's business-travel story, adds an animal-free requirement, marks Japanese-inspired tailoring as an editable aesthetic preference, and selects an owned blazer as the anchor of the look. The application retains these elements when she opens a dress, compares sizes on the Body Twin, examines similar-body reviews, rejects a visually similar substitute whose material cannot be verified, and adds an alternative from another seller. At review, the system reports which items satisfy the trip, which evidence remains uncertain, whether the complete plan can arrive before departure, and what changed since the plan was saved.

The interaction is neither a long questionnaire nor an autonomous purchasing agent. Requirements can be introduced when they become relevant, and every system proposal remains editable. A style preference is treated as authored taste rather than inferred identity: Japanese-inspired styling has no necessary relation to ethnicity or nationality. An animal-free rule is treated as a hard requirement rather than a weak ranking signal. A generated or creator-supplied image is treated as a representation whose relation to the current seller's product must be established separately.

\subsection{Social Story Catalog and product correspondence}

The Social Story Catalog internalises a discovery pattern that is currently external to most storefronts. Each story retains its creator, sponsorship or affiliate relationship, publication date, demonstrated composition, linked candidates, and correspondence state. The customer may inspect, save, remix, dismiss, or leave for direct Browse. A story view is not counted as progress merely because it increases dwell time; it must advance exploration, comparison, correction, or a deliberately saved state.

Figure~\ref{fig:story-card} shows the central evidence distinction. The story may establish that a creator presented a jacket, but it does not establish that a visually similar listing from a current seller has the same material, construction, or delivered appearance. VogueDrop therefore separates narrative evidence, represented sample, product identifier, seller evidence, substitute status, and unknown correspondence.

\begin{figure}[H]
\centering
\begin{tikzpicture}[font=\small,node distance=6mm and 8mm]
\tikzset{
panel/.style={draw=ink,rounded corners=2pt,fill=white,minimum height=50mm,align=left,inner sep=5pt},
card/.style={draw=blue,rounded corners=2pt,fill=white,minimum width=49mm,minimum height=13mm,text width=44mm,align=left,inner sep=4pt},
alert/.style={draw=accent,very thick,rounded corners=2pt,fill=soft,minimum width=49mm,minimum height=13mm,text width=44mm,align=left,inner sep=4pt}
}
\node[panel,minimum width=55mm,text width=49mm] (story) {\textbf{Creator story}\\[2mm]
\fbox{\rule{0pt}{20mm}\rule{43mm}{0pt}}\\[2mm]
Creator: Aiko Studio\\Relationship: paid collaboration\\Demonstrates: jacket + trousers};
\node[card,right=of story,yshift=17mm] (exact) {\textbf{Demonstrated jacket}\\Exact identifier verified\\Seller A; stock checked};
\node[alert,below=of exact] (similar) {\textbf{Visually similar candidate}\\Not demonstrated\\Material correspondence unknown};
\node[card,below=of similar] (actions) {\textbf{Customer actions}\\Inspect; compare sellers; save; reject substitute; Browse};
\node[draw=accent,dashed,fit=(exact)(similar)(actions),inner sep=4pt,label={[font=\small,text=accent]above:Current commerce candidates}] {};
\end{tikzpicture}
\caption{Story--product evidence in VogueDrop. Persuasive representation and current transaction evidence remain related but non-equivalent.}
\label{fig:story-card}
\end{figure}

\subsection{From curated looks to a shared Fashion House}

Story-led commerce normally permits a shopper to view, save, or purchase a look assembled by a brand or creator. It rarely gives the shopper a structured means to express why the composition works, adapt it around owned items and requirements, or return the result to a community as an inspectable alternative. VogueDrop therefore includes a \emph{Fashion House}: a composition workspace in which shoppers, creators, and invited collaborators begin from task-relevant layouts---for example, a work-trip capsule, an occasion look, a weekly rotation, or an empty canvas---and populate explicit garment roles rather than merely collecting products in a list.

The interaction borrows the template-and-canvas logic familiar from room-planning applications while changing its unit of validity. A room layout is constrained primarily by space and compatibility; a fashion composition may also depend on body and fit evidence, owned garments, material requirements, weather, budget, delivery time, seller correspondence, and social context. Figure~\ref{fig:fashion-house} therefore places composition between template selection and evidence checking. A shared look is not represented as valid merely because it is visually attractive or popular.

\begin{figure}[H]
\centering
\begin{tikzpicture}[font=\small,node distance=7mm and 4mm]
\tikzset{
fh/.style={draw=ink,rounded corners=2pt,fill=white,minimum width=34mm,minimum height=18mm,text width=30mm,align=center,inner sep=3pt},
fhkey/.style={draw=accent,very thick,rounded corners=2pt,fill=soft,minimum width=34mm,minimum height=18mm,text width=30mm,align=center,inner sep=3pt},
fharr/.style={-{Latex[length=2mm]},thick,draw=ink},
fhback/.style={-{Latex[length=2mm]},thick,dashed,draw=blue}
}
\node[fh] (template) {Choose layout\\capsule, occasion, rotation, blank};
\node[fhkey,right=of template] (compose) {Compose look\\owned, anchor, complement, alternative};
\node[fh,right=of compose] (check) {Check viability\\fit, values, evidence, seller, delivery};
\node[fh,right=of check] (share) {Share with control\\private, invited, public, creator};
\node[fhkey,below=of check] (remix) {Inspect and remix\\attribution and lineage retained};
\draw[fharr] (template)--(compose);
\draw[fharr] (compose)--(check);
\draw[fharr] (check)--(share);
\draw[fharr] (share)--(remix);
\draw[fhback] (remix)--(compose);
\end{tikzpicture}
\caption{The VogueDrop Fashion House. Templates accelerate composition, but constraint and evidence checks precede controlled sharing; subsequent remixing retains authorship and derivation.}
\label{fig:fashion-house}
\end{figure}

Sharing is governed rather than assumed. The author selects the audience and whether viewing, commenting, copying, or remixing is permitted. Each derived look retains its source composition, changed items, commercial relationships, and unresolved evidence. Community response can support discovery and critique, but likes and reposts cannot override a hard requirement or establish product correspondence. The resulting interaction reveals an additional need: commerce must support customer-authored composition without turning creative participation into untraceable promotion. That need completes the problem statement summarised next.

\subsection{From a human-only interface to delegated interaction}

A second interaction boundary appears when an authorised software agent acts for the customer. Conventional commerce exposes information primarily through pages optimised for human perception. An agent instead requires typed product, seller, evidence, feasibility, preference, composition, and transaction state; declared action preconditions; and machine-interpretable results. Requiring the agent to reconstruct those objects from presentation markup reproduces representational discontinuity at a new boundary. Conversely, a machine-efficient endpoint is insufficient when the customer cannot see what the agent inferred, changed, or proposes to commit.

VogueDrop therefore addresses the human interface and the agent interface to the same canonical experience state. A protocol such as the Model Context Protocol (MCP) can expose data as resources and bounded commerce operations as tools, but protocol exposure is only an implementation mechanism \cite{mcp2025}. Since MCP demonstrates only a resource/tool exposure mechanism, subsequent work has begun to formalise the authority question directly: matching an agent's granted scope to the semantic content of its task \cite{elhelou2025}, and enforcing programmable privilege boundaries at the point of tool invocation \cite{shi2025progent}. The design requirement is stronger: every delegated call carries the acting principal, authority scope, relevant evidence version, preconditions, material differences, and an attributable receipt. Search, comparison, constraint checking, seller verification, and plan preparation may proceed within an explicitly delegated scope. Checkout approval, acceptance of materially changed seller or delivery terms, disclosure of protected preferences, publication of a composition, and other operations outside that scope suspend and request human intervention.

\begin{figure}[H]
\centering
\begin{tikzpicture}[font=\small,node distance=7mm and 10mm]
\tikzset{
actor/.style={draw=ink,rounded corners=2pt,fill=white,minimum width=34mm,minimum height=15mm,text width=30mm,align=center},
state/.style={draw=accent,very thick,rounded corners=2pt,fill=soft,minimum width=48mm,minimum height=23mm,text width=43mm,align=center},
gate/.style={draw=blue,rounded corners=2pt,fill=white,minimum width=42mm,minimum height=18mm,text width=38mm,align=center},
darr/.style={-{Latex[length=2mm]},thick,draw=ink},
escalate/.style={-{Latex[length=2mm]},thick,dashed,draw=accent}
}
\node[actor] (human) {Human shopper\\visual exploration and judgement};
\node[state,right=of human] (shared) {Canonical experience state\\data, evidence, permissions, actions};
\node[actor,right=of shared] (agent) {Authorised agent\\structured resources and tools};
\node[gate,below=of shared] (gate) {Authority and material-change gate\\{\scriptsize within scope: receipt; outside scope: human}};
\node[actor,below=of human] (approval) {Human intervention\\inspect, correct, approve, reject};
\node[actor,below=of agent] (receipt) {Action receipt\\actor, input, evidence, result, expiry};
\draw[darr] (human)--node[above,font=\scriptsize,fill=white,inner sep=1pt]{graphical interface}(shared);
\draw[darr] (agent)--node[above,font=\scriptsize,fill=white,inner sep=1pt]{structured interface}(shared);
\draw[darr] (shared)--(gate);
\draw[escalate] (gate)--(approval);
\draw[darr] (gate)--(receipt);
\draw[darr] (approval)--(shared);
\draw[darr] (receipt)--(shared);
\end{tikzpicture}
\caption{Dual-addressable commerce interaction. Humans and authorised agents operate on the same state; authority and material-change gates determine whether an action executes or returns to the human with its context intact.}
\label{fig:delegated-interaction}
\end{figure}

The intervention must resume the task rather than restart it. The customer receives the agent's proposed action, rationale, supporting and missing evidence, alternatives considered, material changes, and requested authority in the visual interface. Approval or correction then returns as structured state to the agent. This reciprocal handoff reveals a twelfth candidate need: the application must support efficient delegated action without permitting the agent to become an unattributed customer, an unconstrained purchaser, or a separate interface whose state cannot be audited. The resulting candidate set of needs is stated next.

\subsection{Candidate design needs revealed through design-led induction}

The walkthrough reports twelve candidate needs identified through design-led induction while building and iteratively critiquing VogueDrop. They are stated as deficiencies in the inherited interaction so that Section~4 can make the design rationale inspectable rather than present the corresponding principles without an evidence path. These needs were surfaced through iterative design critique of VogueDrop and are treated as candidate needs pending Study~A confirmation, not as an independently validated taxonomy.

\footnotesize
\begin{longtable}{p{0.055\textwidth}p{0.32\textwidth}p{0.54\textwidth}}
\caption{Candidate design needs surfaced through the VogueDrop interaction shift.}\label{tab:needs}\\
\toprule
\textbf{N} & \textbf{Need} & \textbf{Failure in the inherited interaction}\\
\midrule
\endfirsthead
\toprule
\textbf{N} & \textbf{Need} & \textbf{Failure in the inherited interaction}\\
\midrule
\endhead
1 & Preserve heterogeneous beginnings & Story, situation, saved state, and direct search collapse into the same generic landing or grid.\\
2 & Preserve task state & Requirements, reasons, rejected options, and uncertainty disappear across pages or sessions.\\
3 & Preserve product relations & Products are displayed as isolated listings although suitability depends on a look and owned items.\\
4 & Preserve preference authority & Hard values, temporary taste, and inferred preference enter one undifferentiated ranking profile.\\
5 & Distinguish representation from product & A persuasive image or story can be attached to a non-corresponding seller product.\\
6 & Test feasibility before suitability & Relevance is presented before stock, delivery, package, return, and substitution conditions are known.\\
7 & Bound commercial influence & Sponsorship, margin, or inventory pressure can outrank an unmet customer requirement.\\
8 & Match representation to uncertainty & A fixed page or chat stream must serve fit, composition, seller, evidence, and plan comparison equally.\\
9 & Preserve transaction authority & Recommendation, initiation, changed terms, approval, and execution can become difficult to distinguish.\\
10 & Attribute outcomes narrowly & Delivery, use, return, or dispute can update a broad profile rather than the responsible product--seller--decision relation.\\
11 & Enable governed composition authorship & Customers can consume or save complete looks but cannot efficiently construct, test, share, and traceably remix constraint-aware compositions.\\
12 & Coordinate human and delegated interaction & Human-facing pages and agent-facing data paths can expose different state, obscure authority, or lose context when material judgement returns to the customer.\\
\bottomrule
\end{longtable}
\normalsize

These twelve candidate needs delimit the present VogueDrop problem space rather than claiming a complete or canonical taxonomy. The next section connects each candidate need to prior research, states a developer hypothesis, and formulates the candidate design principle intended to satisfy it. Study~A provides the independent mechanism for confirming, revising, merging, or rejecting this design-led decomposition.

\section{Experience-Led Commerce Design Principles}

VogueDrop is organised by twelve candidate design principles that challenge assumptions inherited from catalogue-first commerce. The principles are not justified by novelty or by the availability of agentic AI. Each identifies a user-experience failure, proposes a mechanism, and implies an observable comparison with the conventional storefront. Their common objective is improvement across speed, effectiveness, comfort, flexibility, ease, relevance, control, and result quality while preserving the platform's ability to attract qualified traffic, support useful engagement, and convert viable demand. The separation of outcomes follows task--technology fit and usability research, which requires evaluation relative to users, goals, and contexts rather than feature quantity \cite{goodhue1995,iso2018}.

\subsection{User-experience outcome model}

For participant \(i\), task \(k\), and interface condition \(j\), user experience is represented as the vector
\[
\mathbf{U}_{ikj}=\langle S,E,C,F,A,R,K,Q\rangle,
\]
where \(S\) is speed, \(E\) effectiveness, \(C\) comfort, \(F\) flexibility, \(A\) ease of interaction, \(R\) relevance, \(K\) perceived control, and \(Q\) result quality. These dimensions are reported separately. A faster interaction is not judged better when it produces a constraint violation, and a more enjoyable interaction is not judged effective when the customer fails to identify a misleading substitute.

Platform outcomes are also disaggregated into qualified continuation, useful exploration, viable conversion, return intention, referral intention, and failure indicators such as abandonment, return risk, dispute, and support need. Laboratory measures can test the first-order interaction mechanisms; actual traffic, conversion, return rate, and retention require longitudinal field evidence.

\subsection{Derivation structure}

Each candidate principle follows the same argumentative sequence: research identifies an established interaction or decision property; the VogueDrop walkthrough exposes where the inherited storefront fails to preserve that property; a developer hypothesis states the expected user-experience effect; and the principle prescribes the change to be tested. The hypotheses below are provisional until the formative study is complete, but they make the logic of derivation explicit.

\subsection{P1: Multi-entry discovery}

\textbf{Research and need.} Social-commerce research establishes that product discovery occurs through social interaction and creator content, while Information Foraging Theory explains why entry cues must retain information scent \cite{zhang2016social,hu2022social,pirolli1999}. N1 shows that a generic landing removes those cues. \textbf{Developer hypothesis H1.} Matching story, situation, saved-plan, category, and known-item entrances to the task will improve entry--task fit and qualified continuation; known-item time must remain non-inferior. \textbf{Principle.} Our proposed ELC treats these entry modes as first-class and keeps direct Browse continuously available. The predicted UX effects concern relevance, flexibility, ease, and speed rather than the number of entrances displayed.

\subsection{P2: Experience continuity}

\textbf{Research and need.} Direct manipulation requires visible state and reversible action, yet N2 shows that reasons, requirements, rejected candidates, and uncertainty disappear across commerce surfaces \cite{shneiderman1983}. \textbf{Developer hypothesis H2.} Preserving the minimum task-relevant state will reduce reconstruction events and workload; correction effort must not exceed the reconstruction avoided. \textbf{Principle.} Our proposed ELC carries source, requirements, candidates, evidence, and unresolved questions across surfaces and sessions with visible correction, scope, and expiry. The predicted UX effects are speed, ease, control, and effective resumption.

\subsection{P3: Relational exploration}

\textbf{Research and need.} Fashion recommendation has established that compatibility and fit are relational properties, but N3 shows that the storefront continues to expose isolated listings \cite{deldjoo2023,nestler2021}. \textbf{Developer hypothesis H3.} Representing owned, anchor, complement, optional, alternative, and substitute roles will increase viable-look completion and owned-item reuse; redundant acquisition must not increase. \textbf{Principle.} Our proposed ELC represents candidates through their role in the desired outcome. The predicted UX effects are effectiveness, result quality, relevance, and comparison flexibility.

\subsection{P4: Preference sovereignty}

\textbf{Research and need.} Human--AI guidelines require correction and contextual control, whereas N4 shows that hard requirements, temporary tastes, and inference can enter one ranking profile \cite{amershi2019}. \textbf{Developer hypothesis H4.} Typing preference authority will reduce hard-constraint violations while preserving alternative exploration; maintenance effort must remain within a preregistered boundary. \textbf{Principle.} Our proposed ELC separates customer-authored requirements, editable taste, temporary context, system proposals, and prohibited inference bases. The predicted UX effects are relevance, control, flexibility, comfort, and result quality.

\subsection{P5: Evidence-scoped correspondence}

\textbf{Research and need.} Work on generated advertising imagery shows that origin disclosure and authenticity affect response, but N5 concerns a different relation: whether the current seller product corresponds to what was represented \cite{belanche2025,zhang2025}. \textbf{Developer hypothesis H5.} Separating narrative, represented sample, product identifier, seller, substitute, and unknown states will improve mismatch detection and confidence calibration. \textbf{Principle.} Our proposed ELC exposes evidence by source, scope, and time rather than treating visual realism as product proof. The predicted UX effects are effective judgement, control, comfort, and result quality.

\subsection{P6: Recommendation-time feasibility}

\textbf{Research and need.} Task--Technology Fit requires the information used by the system to match the conditions of the task \cite{goodhue1995}. N6 shows that a relevant fashion item can still fail because stock, seller, delivery, package, return, or substitution conditions are unresolved. \textbf{Developer hypothesis H6.} Evaluating those conditions before presenting suitability will reduce infeasible selections and checkout surprise; stale feasibility displays count as failures. \textbf{Principle.} Our proposed ELC makes current feasibility part of recommendation rather than a downstream checkout discovery. The predicted UX effects are effectiveness, speed to a viable result, comfort, and result quality.

\subsection{P7: Customer-compatible commercial ranking}

\textbf{Research and need.} Multi-stakeholder recommendation recognises distinct customer, provider, and platform objectives \cite{abdollahpouri2017}. N7 shows that combining them in one score can allow commercial value to compensate for customer failure. \textbf{Developer hypothesis H7.} Ranking sponsorship, margin, novelty, and inventory only among customer-eligible candidates will reduce commercially induced requirement violations while preserving viable conversion. \textbf{Principle.} Our proposed ELC separates eligibility from commercial ordering. The predicted UX effects are relevance, control, and result quality; the platform effect is viable rather than raw conversion.

\subsection{P8: Adaptive but stable workspace}

\textbf{Research and need.} Direct manipulation and mixed initiative support task-appropriate views when objects, feedback, and correction remain stable \cite{shneiderman1983,horvitz1999}. N8 shows that fit, composition, evidence, seller, and plan comparison cannot be represented equally well by one fixed page or chat stream. \textbf{Developer hypothesis H8.} Selecting an allowlisted view for the unresolved task will reduce active actions and completion time; navigation error, recall, accessibility, and control must remain non-inferior. \textbf{Principle.} Our proposed ELC adapts representation while preserving semantics, navigation, approval, fallback, and rollback. The predicted UX effects are ease, speed, flexibility, and control.

\subsection{P9: Attributable transaction authority}

\textbf{Research and need.} Mixed-initiative systems require an intelligible allocation of initiative and correction \cite{horvitz1999,amershi2019}. N9 shows that recommendation, preparation, initiation, changed terms, approval, and execution can otherwise become indistinguishable. \textbf{Developer hypothesis H9.} Exposing the initiating actor and requiring renewed approval after a material seller, product, price, delivery, or principal change will increase valid-approval accuracy; repeated confirmation must not produce habituation beyond a defined boundary. \textbf{Principle.} Our proposed ELC makes transaction authority attributable and state-dependent. The predicted UX effects are control, comfort, effectiveness, and error prevention.

\subsection{P10: Outcome-linked learning}

\textbf{Research and need.} Information-system success research distinguishes use and satisfaction from individual and organisational outcomes, while N10 shows that commerce feedback can be assigned to an over-broad profile \cite{delone2003}. \textbf{Developer hypothesis H10.} Linking delivery, use, return, dispute, and recovery to the responsible product--seller--evidence--decision relation will improve a later related task and responsibility attribution; unrelated preferences must remain unchanged. \textbf{Principle.} Our proposed ELC learns at the narrowest supported scope. The predicted UX effects are future relevance, recovery quality, control, and repeat-use effectiveness.

\subsection{P11: Shared composition authoring}

\textbf{Research and need.} Fashion recommendation establishes that outfit quality depends on relations among items, social-commerce research establishes that user-created representations influence discovery, and direct manipulation establishes the value of visible, editable objects \cite{deldjoo2023,zhang2016social,shneiderman1983}. These findings do not imply that a saved product collection is an authored composition. N11 shows that the inherited storefront lets customers consume brand- or creator-assembled looks but provides little structure for constructing a look, expressing its rationale and constraints, or publishing an inspectable derivation. \textbf{Developer hypothesis H11.} Relative to an unstructured saved list, a task-relevant composition template with editable garment roles, viability checks, audience controls, and remix lineage will reduce the active actions required to produce a viable outfit and increase correct reuse by a recipient. The hypothesis is rejected if templates reduce composition diversity or constraint satisfaction, if authoring effort exceeds the avoided search and coordination effort, or if sharing obscures authorship, privacy, sponsorship, or product uncertainty. \textbf{Principle.} Our proposed ELC provides a Fashion House in which customers and creators can start from layouts, assemble and test looks, choose who may view or alter them, and share traceable compositions whose item-level evidence and feasibility remain inspectable. P3 concerns how the system represents relations among products; P11 concerns who can author, communicate, and legitimately transform those relations. The predicted UX effects are effectiveness, flexibility, ease, control, relevance, and result quality. Platform-facing effects are qualified creation, useful referral, and return participation rather than raw posting volume.

\subsection{P12: Dual-addressable interaction and accountable handoff}

\textbf{Research and need.} Mixed-initiative interaction requires an intelligible allocation of initiative, and human--AI guidance requires capability communication, correction, and control \cite{horvitz1999,amershi2019}. MCP demonstrates a protocol mechanism through which an application can expose structured resources, tools, and elicitation, but it does not by itself determine which commerce actions an agent may perform or how a consequential task returns to a person \cite{mcp2025}. N12 shows that a visual site and a machine-facing service can otherwise become two inconsistent applications. The authority boundary is grounded in least privilege: an agent should receive only the capabilities required for its delegated task \cite{schneider2003}.

The narrower authority and oversight question underlying N12 has also received recent direct attention. Work on decoupling human-approval logic from ad hoc, per-application workflows argues that oversight should be designed as a first-class architectural layer rather than added as a filter after agent output \cite{cheng2026hitl}. Work reframing agent security as an agent--human interaction problem treats policy specification, runtime approval, and scope configuration as human-facing security mechanisms that require their own design and evaluation methods \cite{wang2026ahi}.

\textbf{Developer hypothesis H12.} Relative to screen interpretation followed by manual task reconstruction, a structured agent-facing contract over the canonical experience state will reduce extraction and transfer errors and shorten delegated search, comparison, and plan-preparation tasks. At an authority or material-change boundary, a state-preserving human handoff will increase correct approval and rejection without increasing comprehension time beyond a preregistered margin. The hypothesis is rejected if the agent receives weaker evidence than the human interface, exceeds delegated authority, cannot identify the acting principal, executes a material change without renewed approval, or returns insufficient context for an informed human decision. \textbf{Principle.} Our proposed ELC makes commerce state dual-addressable: perceptually appropriate for the human and structurally explicit for the agent, but semantically identical in products, constraints, evidence, permissions, and transaction status. Agent actions are typed as inspect, propose, prepare, or execute; their availability depends on delegated scope. MCP may implement the resource and tool boundary, while the design principle requires identity, least authority, preconditions, material-difference detection, elicitation, resumable handoff, and action receipts independently of protocol. The predicted UX effects are speed, ease, effectiveness, control, comfort, and result quality. Platform-facing effects are valid delegated completion and lower support burden, not maximum automation.

The twelve candidate derivations address the design-led needs identified in Section~3, but the reasoning pattern is more general than VogueDrop. Section~5 abstracts that pattern into a reusable model for design research and demonstrates its application to both VogueDrop and Fabula.

\section{A Reusable Model for Principle-Driven Design Research}

The reasoning used to derive our proposed ELC can be expressed as a reusable design-research model. The model begins with research and situated critique, identifies an interaction breakdown, formulates the experience need that the inherited design fails to satisfy, and converts that need into a falsifiable developer hypothesis and a design principle. An artefact then operationalises the principle; observation supports, narrows, or rejects the hypothesis and feeds the revision back into the principle. Figure~\ref{fig:derivation-model} makes this evidence chain explicit.

\subsection{Design-principle derivation and validation cycle}

\begin{figure}[H]
\centering
\begin{tikzpicture}[font=\small,node distance=7mm and 7mm]
\tikzset{
stage/.style={draw=ink,rounded corners=2pt,fill=white,minimum width=39mm,minimum height=13mm,text width=35mm,align=center},
key/.style={draw=accent,very thick,rounded corners=2pt,fill=soft,minimum width=39mm,minimum height=13mm,text width=35mm,align=center},
arr/.style={-{Latex[length=2mm]},thick,draw=ink},
rev/.style={-{Latex[length=2mm]},thick,dashed,draw=accent}
}
\node[stage] (evidence) {Research evidence and situated critique};
\node[stage,right=of evidence] (breakdown) {Observed interaction breakdown};
\node[key,right=of breakdown] (need) {Experience need};
\node[key,below=of need] (hyp) {Falsifiable developer hypothesis};
\node[stage,left=of hyp] (principle) {Design principle and boundary};
\node[stage,left=of principle] (artefact) {Artefact manipulation and measure};
\node[key,below=of artefact,xshift=46mm] (decision) {Support, narrow, or reject};
\draw[arr] (evidence)--(breakdown); \draw[arr] (breakdown)--(need); \draw[arr] (need)--(hyp);
\draw[arr] (hyp)--(principle); \draw[arr] (principle)--(artefact); \draw[arr] (artefact)--(decision);
\draw[rev] (decision.east) to[bend right=24] (principle.south);
\end{tikzpicture}
\caption{Reusable principle-driven design-research model. A principle is derived from an evidenced breakdown, operationalised in an artefact, and retained only through a prespecified observation and boundary condition.}
\label{fig:derivation-model}
\end{figure}

The model separates three claims that are often collapsed. Research can establish that a breakdown is plausible or recurrent; it does not establish that a proposed interface resolves it. An artefact can demonstrate that a principle is implementable; it does not establish improvement. Improvement requires a comparison whose primary outcome corresponds to the hypothesised mechanism and whose boundary condition prevents a local gain from being interpreted as general success.

\subsection{Application to VogueDrop}

For VogueDrop, the generic cycle requires a compact analytical representation of what continuity, feasibility, and correspondence mean. The following constructs are limited to those requirements: the state that must persist, the conditions under which a candidate is eligible, and the evidential limit of a product representation.

\subsection{Persistent experience state}

At interaction time \(t\), VogueDrop represents the customer-facing state as
\[
\mathcal{X}_t=\langle s_t,g_t,h_t,p_t,c_t,r_t,e_t,f_t,a_t,o_t\rangle,
\]
where \(s_t\) records discovery source and provenance; \(g_t\) the situation; \(h_t\) hard requirements; \(p_t\) editable preferences and their sources; \(c_t\) candidates; \(r_t\) relations among stories, owned items, and candidates; \(e_t\) evidence and uncertainty; \(f_t\) current stock, seller, delivery, and return feasibility; \(a_t\) action and approval state; and \(o_t\) observed post-purchase outcomes. The tuple is a minimum logical schema, not a requirement to collect every field. Each retained element must have a visible source, purpose, correction path, scope, and expiry.

Let \(Q(z)\) be the information required to complete a task beginning from entry \(z\), and let \(T(z)\) be the state exposed after the interface transition. The lost task information is
\[
L(z,T)=Q(z)\setminus T(z).
\]
A reconstruction event occurs when the customer must reacquire, re-enter, or correct an element of \(L\). P1 and P2 predict that VogueDrop reduces such events in story-led and situational tasks, but not that additional state benefits a known-item task.

\begin{proposition}[Representational discontinuity]
Let \(z_1\) and \(z_2\) be entry states for which the interface exposes the same state, \(T(z_1)=T(z_2)\), although the task-correct actions differ, \(a^{\star}(z_1)\neq a^{\star}(z_2)\). Let \(\pi=\Pr(z_1\mid T)\). Any deterministic or randomised decision rule \(\delta(T)\) that acts without reconstructing a distinguishing element has conditional error bounded by
\[
\Pr\!\left[\delta(T)\neq a^{\star}(z)\mid T\right]\geq \min\{\pi,1-\pi\}.
\]
Reducing error below this bound requires recovering at least one task-relevant element from \(Q(z_1)\triangle Q(z_2)\).
\end{proposition}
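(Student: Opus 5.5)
The plan is a two-point Bayes (Le Cam-style) argument. Condition on the exposed state \(T\), and restrict attention to the two entry states \(z_1,z_2\) that are consistent with it. Then \(\Pr(z=z_1\mid T)=\pi\) and \(\Pr(z=z_2\mid T)=1-\pi\). The phrase ``acts without reconstructing a distinguishing element'' will be formalised to mean that the action distribution of \(\delta\) is measurable with respect to \(T\) together with independent internal randomness \(U\) only. Consequently, conditional on \(T\), the action \(\delta(T,U)\) is independent of which entry state generated \(T\). This independence is the single modelling commitment on which the rest depends.

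Next, write \(q(a)=\Pr[\delta(T,U)=a\mid T]\) for the conditional action distribution, which by the independence just stated is the same under \(z_1\) and \(z_2\). The conditional probability of error is
\[
\Pr[\delta\neq a^\star(z)\mid T]=\pi\bigl(1-q(a^\star(z_1))\bigr)+(1-\pi)\bigl(1-q(a^\star(z_2))\bigr).
\]
Since \(a^\star(z_1)\neq a^\star(z_2)\), the two events are disjoint, so \(q(a^\star(z_1))+q(a^\star(z_2))\le 1\). Minimising the linear expression over this constraint puts all mass on the more likely correct action. The minimum value is therefore \(\min\{\pi,1-\pi\}\), which gives the bound. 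A deterministic rule is the special case in which \(q\) is a point mass. If \(T\) is consistent with further entry states, the two-state case only lower-bounds the error, provided \(\pi\) is read as the posterior within the pair; I would state this explicitly.

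For the final sentence, argue by contraposition. Suppose \(\delta\) additionally observes some element \(x\) and achieves error below the bound. Then its action distribution must differ between \(z_1\) and \(z_2\) given \(T\), because otherwise the computation above applies unchanged. Hence \(x\) must have a different conditional distribution under the two entry states. If \(x\in Q(z_1)\cap Q(z_2)\), it is required by both tasks, and if it also carries the same value, it cannot separate them. If \(x\notin Q(z_1)\cup Q(z_2)\), it is not task-relevant at all. The remaining case, a distinguishing and task-relevant element, is exactly an element of \(Q(z_1)\triangle Q(z_2)\).

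I expect this last step to be the main obstacle, because it depends on how \(Q\) is interpreted. The cleanest course is to adopt an explicit convention: an element of \(Q(z)\) is an information item paired with its value. Under that convention, any item whose value differs across \(z_1\) and \(z_2\) appears in the symmetric difference. I would state this convention as part of the proof rather than claim more generality than it supports.
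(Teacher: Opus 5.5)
Your proof is correct and follows essentially the same route as the paper: identical exposed state forces a single action distribution for both entries, so the error is a convex combination of $1-\pi$ (always choosing $a^{\star}(z_1)$) and $\pi$ (always choosing $a^{\star}(z_2)$), and any other action is wrong for both; your linear minimisation under $q(a^{\star}(z_1))+q(a^{\star}(z_2))\le 1$ is a cleaner statement of that same argument. The paper only asserts the final sentence, so your contrapositive with item--value pairs in $Q$ sharpens it rather than taking a new route, but note that ruling out the case $x\notin Q(z_1)\cup Q(z_2)$ depends on the paper's implicit assumption that reconstruction recovers only elements of $L(z)=Q(z)\setminus T(z)$ (otherwise a cue correlated with $z$ could still lower the error), and not merely on $x$ being task-irrelevant.
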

\begin{proof}
Because the exposed state is identical, a state-only rule must use the same action distribution for both entries. Always choosing \(a^{\star}(z_1)\) produces error \(1-\pi\), while always choosing \(a^{\star}(z_2)\) produces error \(\pi\). Randomising between them produces a convex combination of those errors and cannot improve on their minimum; any other action is wrong for both entries. Therefore the minimum state-only error is \(\min\{\pi,1-\pi\}\). A lower error requires information that distinguishes the entries, which must be recovered from the task-relevant difference \(Q(z_1)\triangle Q(z_2)\).
\end{proof}

\paragraph{Worked prediction.}
Consider 40 matched transitions into an otherwise identical product card: 20 begin from a creator representation whose exact product--seller correspondence is verified, and 20 begin from a visually similar substitute for which correspondence is unknown. The correct actions differ---continue with the verified candidate versus request evidence or reject the substitute---but a card that omits the source--product relation exposes the same decision state. With \(\pi=0.5\), Proposition~1 gives a state-only error floor of \(0.5\), or at least 20 incorrect actions in 40 trials. With a 70:30 task mixture, the corresponding floor is \(0.3\). These values are illustrative design predictions rather than observed results. Study~B tests whether preserving the distinguishing state reduces correspondence errors below the matched state-only benchmark and whether the reduction is mediated by fewer reconstruction events.

\subsection{Candidate viability}

A candidate \(c\) is eligible for commercial ranking only when it passes the customer's declared requirements and the task's minimum feasibility conditions:
\[
V(c\mid\mathcal{X}_t)=I_h(c)\,I_e(c)\,I_f(c),
\]
where \(I_h\) indicates satisfaction of hard requirements, \(I_e\) indicates sufficient evidence for consequential claims, and \(I_f\) indicates seller and delivery feasibility. Each indicator is binary at the eligibility boundary; uncertainty is not converted into a small positive score. Sponsorship, margin, novelty, and predicted conversion may rank candidates only within \(\{c:V(c\mid\mathcal{X}_t)=1\}\). This ordering operationalises P4--P7 without claiming that commercial objectives are illegitimate.

The eligibility function is also an analysis rule rather than notation used only to describe the interface. For participant \(i\), task \(k\), condition \(j\), and selected candidate \(c_{ikj}\), define
\[
Y^{V}_{ikj}=V(c_{ikj}\mid\mathcal{X}_{t}),
\qquad
Y^{VC}_{ikj}=I_{\mathrm{commit}}\,V(c_{ikj}\mid\mathcal{X}_{t}),
\]
where \(Y^{V}\) is eligible completion and \(Y^{VC}\) is viable conversion or viable plan retention. A commercial-compensation violation is
\[
Y^{C}_{ikj}=\mathbb{1}\!\left[V(c_{ikj}\mid\mathcal{X}_{t})=0
\;\land\; \exists c'\in c_t:V(c'\mid\mathcal{X}_{t})=1\right].
\]
Thus a fast selection with \(V=0\) is recorded as a failure rather than an efficiency gain, and P7 is rejected when commercial ordering selects an ineligible candidate despite an eligible alternative.

\subsection{Representation--product correspondence}

Let \(x\) denote a story image or other product representation, and let \(p\) denote the physical product supplied by a particular seller. The interface may display a correspondence claim only with evidence \(E_{x,p}\) whose scope links the representation to that product and seller. Image-origin disclosure alone concerns how \(x\) was produced; it does not establish \(E_{x,p}\).

\begin{proposition}[Visual non-identifiability]
If the same observed representation \(x\) is compatible with two materially different products \(p_1\) and \(p_2\), no decision rule using only \(x\) can determine with certainty which product will be delivered.
\end{proposition}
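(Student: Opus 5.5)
The plan is to reduce Proposition~2 to Proposition~1 by treating the representation \(x\) as the only exposed state. Take the two entry situations \(z_1\) and \(z_2\) in which the same representation \(x\) is attached to products \(p_1\) and \(p_2\), respectively. Set \(T(z_1)=T(z_2)=x\). The task-correct answer to ``which product will be delivered'' is \(a^{\star}(z_1)=p_1\neq p_2=a^{\star}(z_2)\). The distinctness comes from the hypothesis that \(p_1\) and \(p_2\) are materially different: they differ in some consequential attribute such as material, construction, or seller.

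The first step is to make ``compatible with'' precise. I will read it as saying that both pairs \((x,p_1)\) and \((x,p_2)\) have positive probability under the generating process. Equivalently, \(\pi=\Pr(p_1\mid x)\in(0,1)\). The second step is to observe that any rule \(\delta\) using only \(x\) induces the same action distribution in both situations. The point is the one made in the proof of Proposition~1: such a rule cannot condition on anything that distinguishes the two situations.

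The third step invokes Proposition~1 directly. The conditional error satisfies \(\Pr[\delta(x)\neq p\mid x]\geq\min\{\pi,1-\pi\}\), and this bound is strictly positive because \(0<\pi<1\). Determining the delivered product ``with certainty'' would mean zero error, so this contradicts certainty.

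For robustness, I would add a prior-free remark in case no probability on \((z_1,z_2)\) is granted. A deterministic \(\delta(x)\) returns a single product. If it returns \(p_1\), it is wrong in the world where \(p_2\) is delivered, and vice versa. Since both worlds are consistent with \(x\), the rule cannot be correct in every compatible world. A randomised rule is a mixture of these cases and is wrong with positive probability in at least one world. The corollary to state is the paper's design point: certainty requires evidence \(E_{x,p}\) outside \(x\) that separates \(p_1\) from \(p_2\). This mirrors the final sentence of Proposition~1, which says that lower error requires recovering an element of \(Q(z_1)\triangle Q(z_2)\).

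The main obstacle is the definitional step, not the argument itself. ``Compatible'' and ``materially different'' have to be formalised so that Proposition~1's hypotheses hold exactly. In particular, I must exclude the degenerate case \(\pi\in\{0,1\}\), where compatibility is nominal. I would handle this by defining compatibility as positive support, and by noting that the prior-free version still establishes the impossibility of certainty without needing any \(\pi\).
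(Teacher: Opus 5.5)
Your argument is correct, but its main line differs from the paper's. The paper proves Proposition~2 directly and without any probability. A rule that sees only $x$ receives identical input in the world where $p_1$ is delivered and in the world where $p_2$ is delivered. It therefore returns the same judgement in both and is wrong in at least one, so seller-, sample-, batch-, or delivery-linked evidence is required.

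That is precisely your ``prior-free remark.'' Your version also handles randomised rules, since being certain in both worlds would require outputting $p_1$ and $p_2$ each with probability one.

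Your primary route instead embeds the statement in Proposition~1 with $T=x$ and $\pi=\Pr(p_1\mid x)$. This gives you a quantitative error floor $\min\{\pi,1-\pi\}$. It also connects Proposition~2 to the state-only bound that the paper's worked prediction and Study~B analysis apply to correspondence decisions.

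It costs an assumption the statement does not provide: a generating distribution with $0<\pi<1$. That is why you needed the positive-support reading of ``compatible.'' As written, it also tacitly assumes that $p_1$ and $p_2$ are the only products compatible with $x$. Otherwise $\Pr(p_2\mid x)\neq 1-\pi$, and you should either condition on the delivered product lying in $\{p_1,p_2\}$ or use the general floor $1-\max_i\Pr(p_i\mid x)$.

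Since the proposition only asserts that certainty is impossible, the two-world argument already suffices. The reduction is an optional strengthening rather than a necessary step.
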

\begin{proof}
The rule receives the same input in both possible worlds and must return the same judgement, although the delivered product differs. It is incorrect in at least one world. Seller-, sample-, batch-, or delivery-linked evidence is therefore required.
\end{proof}

Proposition~2 determines the evidence indicator in the viability model. When a representation \(x\) remains compatible with materially different seller products and no scoped evidence \(E_{x,p}\) resolves the relation, \(I_e(c)=0\); visual similarity or image-origin disclosure cannot promote the candidate into the eligible set. The proposition is therefore tested through correspondence classification, false-assurance rate, and the frequency with which participants select a candidate that the frozen task truth marks as evidence-ineligible.

The two theorem-style claims have received preliminary support through exploratory survey and interview work conducted alongside the iterative development of VogueDrop. At this stage, that evidence is treated as an initial verification of practical relevance, not as a confirmatory scientific test of the theorems or their predicted behavioural consequences. Study~A will independently examine whether the underlying breakdowns recur in participant accounts, and Study~B will test their measurable implications under controlled interface conditions.

\subsection{Short mapping to Fabula}

Fabula provides a useful external illustration because its content domain differs while its research logic follows the same cycle \cite{mirowski2026fabula}. Literature on narrative theory and AI writing, together with design interviews and writer sessions, supplies the research evidence. The team then identifies breakdowns and tensions: a chat-dominant interface does not equally support improvisational ``Gardeners'' and plan-oriented ``Architects,'' and hierarchical scene--beat structures can encode culturally situated assumptions.

Those breakdowns become experience needs for alternative workflows, fine-grained author control, culturally contestable structure, and iterative feedback. Fabula formulates developer hypotheses around four design choices, including a custom Gardener/Architect interface and hierarchical narrative planning. The app operationalises those choices through story plans, scene and beat structures, editable snapshots, custom views, and revision controls rather than treating the use of an LLM as the design contribution.

Design interviews, writing sessions, surveys, expert feedback, and wider testing then produce observations that support some mechanisms and qualify others. Some writers valued the custom interface while others still wanted conversational interaction; participants also challenged structural and cultural assumptions. In terms of Figure~\ref{fig:derivation-model}, these are not implementation failures to hide but evidence used to narrow the principle and revise the artefact.

The mapping does not claim that Fabula and VogueDrop share design principles. It shows that the same research model can connect domain evidence, an interaction breakdown, a falsifiable design choice, an artefact, and a revision decision. The following evaluation programme applies that model to the twelve candidate principles of our proposed ELC.

\section{Evaluation Programme}

The evaluation treats the twelve candidate principles as separable design claims with prespecified falsification conditions. This avoids vague statements that additional context, adaptation, delegation, or AI should improve experience. Each principle is supported only when its proposed mechanism changes the designated primary outcome in the predicted direction without violating its boundary condition.

\subsection{Principle-level hypotheses}

\begin{longtable}{p{0.035\textwidth}p{0.32\textwidth}p{0.22\textwidth}p{0.285\textwidth}}
\caption{Hypotheses, primary outcomes, and falsification conditions for the twelve candidate principles.}\label{tab:hypotheses}\\
\toprule
\textbf{P} & \textbf{Falsifiable hypothesis} & \textbf{Primary outcome} & \textbf{Evidence against}\\
\midrule
\endfirsthead
\toprule
\textbf{P} & \textbf{Falsifiable hypothesis} & \textbf{Primary outcome} & \textbf{Evidence against}\\
\midrule
\endhead
1 & Multi-entry discovery increases correct entry selection and qualified continuation in open-ended tasks. & Entry--task match; qualified continuation. & Direct retrieval slows or stories increase attention without progress.\\
2 & Continuity reduces reconstruction of previously available task information. & Reconstruction-event count. & No reduction or correction burden exceeds avoided reconstruction.\\
3 & Relational exploration increases coherent-plan completion and owned-item reuse. & Viable-look completion; redundant-item count. & No completion gain or unnecessary purchasing increases.\\
4 & Preference sovereignty reduces hard-constraint violations while preserving alternative exploration. & Constraint-violation rate. & Violations persist, alternatives collapse, or maintenance burden increases.\\
5 & Evidence-scoped correspondence increases correct mismatch detection and confidence calibration. & Correspondence accuracy. & Detection does not improve or false assurance increases.\\
6 & Recommendation-time feasibility reduces selection of candidates that cannot meet seller and delivery conditions. & Infeasible-selection rate. & No reduction or feasibility displays become stale and misleading.\\
7 & Customer-compatible ranking preserves viable conversion while reducing commercially induced requirement violations. & Viable-conversion rate. & Sponsorship relaxes eligibility or viable conversion materially declines.\\
8 & An adaptive stable workspace reduces action cost in complex tasks without increasing disorientation. & Active actions; navigation errors. & No efficiency gain or accessibility, recall, or control deteriorates.\\
9 & Attributable authority improves detection and rejection of materially changed transactions. & Valid-approval rate. & Change detection does not improve or confirmation habituation increases.\\
10 & Outcome-linked learning improves later recommendation and responsibility attribution. & Subsequent-task accuracy. & Learning is misattributed, overgeneralised, or requires excessive reporting effort.\\
11 & Shared composition authoring reduces the effort of producing a viable outfit and supports correct recipient reuse. & Active actions; viable-composition and reuse rate. & Effort does not fall, diversity or constraint satisfaction declines, or sharing obscures authorship and privacy.\\
12 & Dual-addressable interaction reduces delegated-task transfer errors while preserving informed human authority. & Transfer-error count; valid-approval rate. & Agent and visual state diverge, authority is exceeded, or handoff comprehension deteriorates.\\
\bottomrule
\end{longtable}

The hypotheses are deliberately narrower than a claim that VogueDrop creates a better overall experience. The integrated system is supported only when principle-level effects are accompanied by improvement in the prespecified user-experience vector and when the known-item task remains within a preregistered non-inferiority margin. The comparative study below defines how those observations will be obtained.

The empirical programme contains a formative critical-incident study followed by a controlled interface study. The first independently tests whether the design-induced needs and proposed breakdowns occur in participants' existing shopping journeys, and it may revise the candidate set and interface scenarios. The second estimates the effects of the implemented principles under matched catalogue and task conditions.

\subsection{Study A: cross-channel reconstruction and principle revision}

Study~A uses two analytically separated phases with shoppers, independent creators, and fashion merchants. Its primary purpose is to test whether the twelve candidate needs in Table~\ref{tab:needs}, which were surfaced through VogueDrop's design process, are independently corroborated in participants' own purchase histories.

\textbf{Phase A1---independent critical-incident elicitation.} Participants reconstruct a recent purchase journey before any ELC terminology, VogueDrop concept, prototype, or N1--N12 label is introduced. Interviewers use neutral clarification and sequence prompts only. The Phase~A1 record is closed and versioned before any concept exposure; only this phase can count as independent corroboration of Table~\ref{tab:needs}.

\textbf{Phase A2---concept and boundary walkthrough.} After Phase~A1 is complete, participants inspect matched catalogue-first and VogueDrop concepts, including adversarial cases involving sponsorship, unavailable demonstrated products, substitute sellers, misleading visual similarity, and delegated action. Phase~A2 evaluates boundary conditions, anticipated burden, terminology, implementation risks, and missing mechanisms. Evidence first elicited in Phase~A2 cannot be counted as independent corroboration of a need.

Recruitment seeks variation in shopping frequency, story-led versus direct-search behaviour, marketplace use, creator practice, and merchant scale. Within each stakeholder group, recruitment continues until both conditions hold: (i) two consecutive Phase~A1 interviews yield no new first-cycle codes, and (ii) the retained categories have sufficient variation, boundary conditions, and negative-case coverage for interpretation. These criteria apply after at least 12 shoppers and 4 participants in each provider group; the stopping decision and supporting evidence are recorded in a sampling memo. Recruitment may extend beyond 16 shoppers or 6 participants in a provider group when either condition remains unmet.

The unit of analysis is a meaning-bearing episode: entry, boundary crossing, reconstruction, verification, correction, comparison, or rejection. Phase~A1 is first coded inductively with in-vivo and process codes, without using N1--N12 as first-cycle labels. After the first-cycle codebook stabilises, second-cycle matrices map emergent categories to Table~\ref{tab:needs} and compare entry mode and stakeholder role. Phase~A2 is coded separately for boundary, burden, implementation, and mechanism evidence. At least two analysts independently code a purposive subset; disagreements revise code definitions rather than being resolved only through majority judgement. Negative, contradictory, conventional-storefront-preference, and boundary-rejection cases are reported systematically in the same disposition matrix as corroborating evidence, with case counts, stakeholder context, and implications stated explicitly.

Study~A independently evaluates the design-led need decomposition and may confirm, revise, merge, or remove candidate principles. After Study~A, Table~\ref{tab:needs} will be revised, annotated, or footnoted to report which candidate needs were corroborated in Phase~A1, which were not observed in an elicited incident despite an applicable opportunity, which were explicitly rejected or contradicted, and which previously unanticipated needs emerged. It does not estimate interface superiority.

\subsection{Study B: controlled comparative evaluation}

Study~B uses a counterbalanced within-subject design with three matched conditions:

\begin{enumerate}
\item \textbf{Conventional Funnel}: social or situational prompt opens a standard landing, product grid, product pages, save, cart, and checkout.
\item \textbf{Disclosure-Enhanced Funnel}: the conventional structure adds creator, sponsorship, seller, and image-origin labels but does not preserve shared state.
\item \textbf{VogueDrop}: the six connected surfaces and structured agent interface implement the twelve candidate principles.
\end{enumerate}

Catalogue, product candidates, prices, merchants, task information, response latency, and available final actions are held constant. Interface and task orders are counterbalanced. Sample size will be determined through a preregistered simulation-based power analysis. The provisional smallest effects of practical interest are a 15-percentage-point absolute increase in P5 mismatch-detection accuracy and a 15-percentage-point absolute reduction in P6 infeasible-selection rate---approximately one additional misleading substitute correctly detected, or one fewer infeasible candidate selected, per seven relevant decisions. These thresholds may be revised before preregistration when Study~A participant judgements clarify practical consequence and a separate Study~B-like timed pilot provides baseline rates; every revision and its rationale will be documented before Study~B recruitment. Experimental timing distributions, variance assumptions, and within-participant correlations will be estimated from that timed pilot, not from the qualitative Study~A walkthroughs, with multiplicity control fixed before data collection.

Study~B compares integrated interface configurations. Task-specific contrasts provide mechanism-consistent evidence mapped to individual candidate principles, but they do not fully isolate component causality because several principles change together. A claim that a single component caused an effect requires a later component ablation or factorial study.

\subsection{Tasks}

\begin{table}[H]
\centering
\caption{Controlled task classes and the mechanism each task tests.}
\label{tab:tasks}
\small
\begin{tabularx}{\textwidth}{p{0.22\textwidth}p{0.44\textwidth}X}
\toprule
\textbf{Task} & \textbf{Scenario} & \textbf{Primary construct}\\
\midrule
Known item & Locate and select an explicitly named jacket from a specified seller and size. & Direct-path non-inferiority.\\
Story led & Follow a sponsored creator look into commerce; one linked item is unavailable and one substitute is visually similar but materially different. & Context continuity and correspondence.\\
High constraint & Construct an animal-free, weather-appropriate business-trip look using an owned blazer, a fixed budget, and a delivery deadline. & Reconstruction and viable completion.\\
Shared composition & Adapt a provided work-trip layout around an owned blazer, share it with an invited collaborator, and assess a constraint-preserving remix. & Authoring effort, viable composition, attribution, and reuse.\\
Delegated preparation & Authorise an agent to prepare a trip look; review its evidence and reject or approve after a seller or delivery term changes. & State-transfer accuracy, authority comprehension, and valid approval.\\
\bottomrule
\end{tabularx}
\end{table}

\subsection{Measures}

Principle-level primary outcomes are defined in Table~\ref{tab:hypotheses}. Behavioural instrumentation records entry choice, direct-Browse invocation, reconstruction, repeated search, backtracking, correction, owned-item reuse, alternative exploration, constraint violations, correspondence classification, infeasible selection, commercial-influence comprehension, navigation error, transaction-change detection, approval, recovery, subsequent-task accuracy, template selection, composition edits, sharing scope, remix lineage, delegated calls, authority checks, handoff events, and transfer errors. For each selected candidate, the frozen task record also computes \(I_h\), \(I_e\), \(I_f\), \(V\), eligible completion \(Y^V\), viable conversion or plan retention \(Y^{VC}\), and commercial-compensation violation \(Y^C\). The system records state-changing actions rather than inferring value from a page view, dwell time, post count, or automation rate.

The user-experience vector is measured through task completion time, active action count, viable completion, NASA-TLX workload \cite{hart1988}, and relevant UEQ dimensions including efficiency, perspicuity, and dependability \cite{laugwitz2008}. Task-specific scales assess comfort, flexibility, relevance, control, evidence comprehension, and result quality. In the delegated-preparation task, authority comprehension is measured immediately after the handoff and before approval through a preregistered five-item check covering the acting principal, the delegated action class, one prohibited action, the material seller or delivery change, and whether renewed human approval is required. Each item is scored correct or incorrect for a total of 0--5; the exact wording, frozen answer key, and no-partial-credit rule are specified before data collection. Valid approval is coded against the frozen task truth, and state-transfer accuracy is the proportion of required product, evidence, constraint, seller, and delivery fields correctly identified. Platform-facing laboratory outcomes include qualified continuation, useful exploration, viable conversion, bounce intention, return intention, and referral intention. They are behavioural or attitudinal precursors, not evidence of actual platform traffic or retention.

\subsection{Analysis}

Counts will use Poisson or negative-binomial mixed models according to observed dispersion. Binary accuracy, violation, approval, eligible-completion \(Y^V\), viable-conversion \(Y^{VC}\), and commercial-compensation \(Y^C\) outcomes will use logistic mixed models. Completion time will be analysed on the log scale. Models include interface condition, task class, and their prespecified interaction as fixed effects, with participant as a random effect. Effect sizes and confidence intervals accompany significance tests. Proposition~1 is evaluated by comparing observed decision error in collapsed-state trials with the prespecified state-only lower bound \(\min\{\pi,1-\pi\}\), then testing whether reconstruction events mediate any reduction achieved by the state-preserving condition.

The viability function determines which completions can enter an efficiency comparison. In the known-item task, the named product, seller, and size fix the requirement, evidence, and feasibility indicators, so a correct completion has \(I_h=I_e=I_f=1\) and therefore \(V=1\). Extra experience state is not expected to improve eligibility in this task; its admissible cost is bounded through non-inferiority. Let \(\tau_{ij}\) be completion time for an eligible known-item trial. The estimand is
\[
\Delta_{\mathrm{NI}}=
\mathbb{E}\!\left[\log \tau_{i,\mathrm{ELC}}-\log \tau_{i,\mathrm{Funnel}}
\mid k=\text{known item},\;V=1\right].
\]
VogueDrop is non-inferior when the upper one-sided confidence bound for \(\Delta_{\mathrm{NI}}\) is below \(\log(1+\delta_{\mathrm{NI}})\), where \(\delta_{\mathrm{NI}}\) is the preregistered maximum acceptable proportional delay. Study~A participant judgements inform what delay would be practically consequential; the numerical margin, timing distribution, and variance assumptions will be finalised before the freeze point using a separate timed pilot that reproduces the Study~B instructions and interface conditions. A \(V=0\) selection is a task failure and cannot be credited as a fast completion. Failure to reject a superiority null is not interpreted as non-inferiority.

Confirmatory tests mapped to principle-level outcomes use a preregistered family-wise or false-discovery procedure. They are interpreted as condition-by-task evidence consistent or inconsistent with the corresponding mechanism, not as isolated component effects. Exploratory relationships among speed, comfort, flexibility, relevance, control, and result quality are reported separately.

\subsection{Study flow}

\begin{figure}[H]
\centering
\begin{tikzpicture}[font=\small,node distance=6mm and 8mm]
\tikzset{
stage/.style={draw=ink,rounded corners=2pt,fill=white,minimum width=42mm,minimum height=13mm,text width=38mm,align=center},
key/.style={draw=accent,very thick,rounded corners=2pt,fill=soft,minimum width=42mm,minimum height=13mm,text width=38mm,align=center},
arr/.style={-{Latex[length=2mm]},thick}
}
\node[stage] (ethics) {Ethics approval and pilot materials};
\node[stage,right=of ethics] (a) {Study A\\critical incidents and principle revision};
\node[key,right=of a] (freeze) {Freeze hypotheses, artefact, outcomes, and analysis};
\node[stage,below=of freeze] (b) {Study B\\counterbalanced comparison};
\node[stage,left=of b] (analysis) {Confirmatory analysis and negative cases};
\node[key,left=of analysis] (revision) {Retain, narrow, or reject each principle};
\draw[arr] (ethics)--(a);
\draw[arr] (a)--(freeze);
\draw[arr] (freeze)--(b);
\draw[arr] (b)--(analysis);
\draw[arr] (analysis)--(revision);
\end{tikzpicture}
\caption{Empirical sequence. Formative findings revise the principles before the confirmatory artefact and analysis are frozen.}
\label{fig:study-flow}
\end{figure}

A principle rejected or substantially narrowed by Study~A will not proceed to Study~B confirmatory testing in its original form; Table~\ref{tab:hypotheses}, the task mapping, and the corresponding analysis will be revised before the freeze point. A newly emerged principle enters confirmatory testing only if its mechanism, outcome, boundary condition, and implementation can be fixed before preregistration; otherwise it is reported as an exploratory or future-study result.

The present manuscript specifies the principles, prototype behaviour, hypotheses, and planned analysis; it does not insert anticipated effects into a Results section. Ongoing research is producing versioned prototype conditions, task materials, an event schema, analysis code, and preregistration records. Empirical results remain outside the present manuscript until a corresponding prototype version is frozen and evaluated.

\section{Discussion}

The proposed contribution is a change in e-commerce design principles, not the addition of a stylist agent or a larger set of product-page features. VogueDrop is designed to preserve the conventional strengths of search, Browse, product inspection, and checkout while changing how state, relations, evidence, feasibility, commercial influence, authorship, delegation, authority, and outcomes are represented across them. The twelve candidate hypotheses determine whether that change produces measurable value.

\subsection{Balancing commercial and customer value}

The platform and the shopper do not optimise the same immediate outcome. The shopper may seek a suitable purchase, informed exploration, reuse of an owned item, or a justified decision not to buy. The platform seeks qualified traffic, useful engagement, viable conversion, and repeat participation. This distinction is consistent with perceived-value research and two-sided platform theory, which separate customer benefit and sacrifice from platform coordination and viability \cite{zeithaml1988,rochet2003}. VogueDrop treats the outcomes as compatible only when the interaction produces progress under the customer's declared requirements. Preserved entry meaning may support qualified exploration, but that proposition is tested through reconstruction, continuation, and viable completion rather than asserted from dwell time.

The same reasoning prevents purchase from becoming the sole measure of success. A customer who verifies that a represented item is not supplied by the current seller, reuses an owned garment, or decides that no candidate satisfies an animal-free constraint has completed valuable decision work without purchasing. For the platform, such outcomes may protect confidence and future participation even though they do not maximise the immediate basket. Longitudinal claims about retention or platform growth require field evidence and are not inferred from the laboratory measures proposed here.

\subsection{Preference is authored rather than demographically assigned}

Preference sovereignty is evaluated through observable interface behaviour. The prototype records whether a preference was declared, imported, or proposed; whether it is a hard requirement or editable taste; its scope and expiry; and whether the customer corrected it. A Japanese-inspired style preference is represented only when the customer authors or accepts it; ethnicity, nationality, name, location, and body are excluded as inference bases. An animal-free requirement is encoded as an eligibility condition, while an occasion-specific colour preference may remain negotiable. P4 fails if the system violates these distinctions or if maintaining them costs more effort than the relevance gained.

\subsection{Delegation is not silent automation}

P12 does not state that an agent should reproduce every human interaction through an API, nor that every tool call requires a confirmation dialog. The relevant unit is delegated authority. Reading public catalogue data, comparing eligible candidates, and preparing a reversible plan may be authorised as a bounded class of action. Publishing a look, revealing a protected preference, accepting a changed seller, or committing payment changes the customer's exposure and therefore requires authority that can be inspected at the time of action. The boundary may vary by customer and task, but it cannot be inferred merely from the agent's ability to call a tool.

The human and agent views need not look alike. They must, however, denote the same products, sellers, evidence scopes, constraints, material changes, and transaction state. This semantic equivalence is intended to allow an efficient machine-facing representation and a comprehensible visual representation to coexist. It is also intended to make failure attributable: the study is designed to distinguish incorrect source data, agent interpretation, delegated scope, human approval, and merchant change rather than assigning every error to an undifferentiated automated journey.

This distinction separates two claims that current commerce infrastructure tends to conflate. Cryptographically signed mandates and scoped payment tokens establish non-repudiable authorization for a transaction; they do not establish that the transaction was one the customer would have approved with full information. AP2 red-team analysis makes this explicit: cryptographic enforcement can preserve execution integrity while pre-signature reasoning is manipulated, producing a valid mandate that does not reflect the user's true intent \cite{debi2026whispers}. P12 is positioned at this second layer: not as a competitor to payment-authorization protocols, but as the user-experience condition---evidence, alternatives, material changes, and rationale---under which a signed mandate is informed rather than merely valid.

These claims remain conditional on the implemented interaction, controlled tasks, and evidence available to the study. The next section therefore specifies the validity limits and the staged prototype evaluation required before broader commerce outcomes can be claimed.

\section{Future Works}

The current paper specifies a prospective evaluation and therefore cannot establish effect size, population preference, actual traffic growth, conversion, return reduction, or retention. Controlled tasks cannot reproduce brand attachment, creator-feed competition, real financial consequence, or long-term learning. The twelve candidate principles also interact in the integrated interface; principle-level event measures improve diagnostic power, but causal isolation will require component ablations.

Ongoing research is implementing and versioning the multi-entry home, Social Story Catalog, persistent experience state, Body and Look workspace, Fashion House, Decision Panel, evidence-scoped product comparison, attributable review, structured agent interface, instrumentation, and matched experimental conditions. Pilot work is examining task timing, non-inferiority margins, scale reliability, and power assumptions. Confirmatory work depends on a frozen interface, stimuli, hypotheses, exclusions, and analysis; field evaluation is separately required for actual traffic, viable conversion, return behaviour, and repeat use. These activities extend the same paper and prototype rather than constituting a separate research programme.

Fashion makes fit, composition, visual representation, values, seller fragmentation, and delivery timing visible, but it does not establish transfer to every commerce domain. Generalisation requires replication with domain-specific definitions of viability, evidence, delegated authority, and outcome. These limitations define the evidence still required and prevent the twelve candidate principles from being presented as a complete cross-domain taxonomy.

\section{Conclusion}

The product-centred funnel assumes that the customer's problem has already been translated into something the catalogue can retrieve. Contemporary commerce increasingly violates that assumption. Social narratives, situations, owned items, and saved plans enter the storefront carrying relations and uncertainties that a product identifier cannot preserve. When the transition collapses those distinctions, the customer must reconstruct them or act on an incomplete state.

Our proposed ELC responds by changing the principles that organise discovery, composition, comparison, recommendation, delegation, transaction, and learning, not by placing an agent beside an unchanged product page. VogueDrop makes twelve candidate principles inspectable: multi-entry discovery, continuity, relational exploration, preference sovereignty, evidence-scoped correspondence, feasibility, bounded commercial ranking, stable adaptation, attributable authority, outcome-linked learning, shared composition authoring, and accountable human--agent handoff.

The paper therefore offers twelve candidate design propositions with prespecified falsification conditions rather than claiming that a richer or more automated interface is self-evidently superior. Each proposition survives only when its designated mechanism improves the corresponding outcome and respects its rejection condition. Ongoing prototype and evaluation research is examining whether these principles improve speed, effectiveness, comfort, flexibility, ease, relevance, control, result quality, and viable platform participation.

\section*{Acknowledgements}
This pre-study manuscript was developed through iterative design critique, literature review, and preliminary survey and interview work. That exploratory evidence provides initial support for the two theorem-style claims but is not treated as confirmatory verification. More rigorous scientific evaluation is planned through Study~A and Study~B, and no completed confirmatory participant study is claimed.

\clearpage
\appendix

\section{Study A Interview Guide}

The guide is divided into two phases. Phase~A1 is completed and versioned before any ELC terminology, VogueDrop concept, or targeted principle probe is shown. Creator and merchant prompts in Phase~A2 are asked only of participants with the relevant role.

\subsection{Phase A1: independent critical-incident elicitation}

Interviewers use neutral sequence and clarification prompts only.

\begin{enumerate}
\item Describe a recent fashion purchase journey that began outside the final commerce site. Where did it begin, and what were you trying to achieve?
\item Walk through the journey step by step, including the platforms, people, information, and decisions involved.
\item At each transition, what, if anything, became easier, harder, less clear, or had to be revisited?
\item What worked well enough that you would prefer the existing storefront or process?
\item What important shopping need, breakdown, or form of support did this interview fail to ask about?
\end{enumerate}

\subsection{Phase A2: concept and boundary walkthrough}

Only after Phase~A1 is closed does the interviewer present matched catalogue-first and VogueDrop concepts and ask the following targeted questions.

\begin{enumerate}
\item What did the original story or recommendation establish, and what did it not establish?
\item If the shown item was unavailable, how did you judge whether a substitute was legitimate?
\item When would an in-site Story Catalog help, and when would it become an engagement trap?
\item How quickly must direct search or Browse remain available?
\item Which preferences should persist, expire, or never be inferred?
\item For creators: what control is required over attribution, sponsorship, affiliate disclosure, compensation, reuse, remix, correction, and removal?
\item Describe a time you assembled or shared a complete outfit rather than an individual product. Would a provided layout help, and when would it constrain your expression or encourage unsuitable copying?
\item If another shopper or creator remixed your composition, which authorship, audience, commercial, evidence, and removal information would need to remain attached?
\item For merchants: when does story traffic become qualified demand, and when does it create unsuitable enquiries or misleading expectations?
\item After seeing both concepts, what would make the conventional storefront preferable?
\item At what point should stock, seller, delivery, package split, return route, or substitution affect a recommendation?
\item When is commercial ranking acceptable, and what customer requirement must never be traded for sponsorship, margin, or inventory pressure?
\item Which comparison view would help with the task, and which parts of navigation and control must remain stable?
\item Describe a purchase in which the seller, price, product, delivery, or initiator changed before approval. Which change required renewed consent?
\item Which shopping work could an authorised agent perform for you without interruption, and which data or actions should remain unavailable without a separate decision?
\item When an agent returns a prepared plan for human intervention, what rationale, evidence, uncertainty, alternatives, changes, and action history are required for an informed approval or rejection?
\item After delivery, return, or complaint, what should the platform learn, and which conclusions would be an unjustified generalisation?
\end{enumerate}

\section{Initial Second-Cycle Mapping Codebook}

Phase~A1 first-cycle coding is inductive and does not use this table. The table is applied only after the inductive codebook stabilises, to map emergent categories against Table~\ref{tab:needs}; Phase~A2 may use it for targeted boundary and implementation analysis. ``Need not observed in elicited incident'' is recorded in a participant-by-need disposition matrix after an applicable opportunity is established. It is not treated as participant-reported thematic evidence.

\small
\begin{longtable}{p{0.22\textwidth}p{0.31\textwidth}p{0.39\textwidth}}
\toprule
\textbf{Code} & \textbf{Definition} & \textbf{Illustrative evidence}\\
\midrule
Entry-mode fit & Match between task and story, situation, saved state, Browse, or search entry & Participant prefers search for a named item but stories for open-ended styling.\\
Cross-boundary loss & Meaning discarded when discovery becomes a marketplace visit & Creator, demonstrated look, or occasion disappears on the landing page.\\
Reconstruction & Recovery of prior facts, constraints, relations, or reasons & Participant reopens a story to remember why a product was retained.\\
Story--product correspondence & Relationship between narrative representation and purchasable candidate & Participant notices that a visually similar substitute is not the demonstrated item.\\
Creator attribution and control & Governance of authorship, sponsorship, compensation, reuse, and removal & Creator rejects an unattributed remix or concealed affiliate relationship.\\
Composition authorship & Construction of a look from a template or blank canvas with explicit garment roles, rationale, constraints, and audience & Participant adapts a work-trip capsule and expects an invited recipient to see why each item was included.\\
Remix governance & Conditions governing copying, transformation, attribution, lineage, commercial disclosure, correction, and removal & Participant permits a public copy but requires derived looks to retain authorship and affiliate history.\\
Social evidence boundary & Separation of narrative experience from verified product evidence & Styling demonstration is useful but not treated as material or seller proof.\\
Relational role & Product understood as owned, anchor, complement, optional, or substitute & Participant keeps an owned blazer and rejects an unnecessary duplicate.\\
Preference source & Distinction among declared requirement, editable taste, temporary context, and inference & Participant corrects a style suggestion but preserves an animal-free rule.\\
Engagement trap & Interaction that prolongs attention without progress & Similar promotional stories prevent access to purposeful search.\\
Boundary rejection & Condition under which a principle of our proposed ELC is unnecessary or harmful & Participant prefers the conventional interface for known-item retrieval.\\
Need not observed in elicited incident & No evidence of a candidate need appears in the Phase~A1 incident despite an applicable opportunity; this is non-observation, not explicit rejection & Participant reports no cross-boundary reconstruction in a relevant incident; the disposition matrix records ``not observed'' unless the participant explicitly rejects or contradicts the need.\\
Feasibility timing & Point at which stock, seller, delivery, package, return, or substitution should constrain recommendation & Participant rejects a preferred item after learning that it cannot arrive before the trip.\\
Commercial eligibility boundary & Customer condition that commercial influence must not relax & Participant accepts sponsored ordering only after animal-free eligibility is satisfied.\\
Adaptive-workspace stability & Representation may change while navigation, semantics, accessibility, approval, and rollback remain stable & Participant values a seller comparison but rejects moving controls or altered requirement meaning.\\
Transaction re-authorisation & Material change requiring renewed customer approval & Participant requires approval after seller, price, product, delivery, or purchasing actor changes.\\
Delegation scope & Data and actions that an agent may inspect, propose, prepare, or execute for a specified task and duration & Participant permits comparison and plan preparation but withholds payment execution and protected preference disclosure.\\
Human--agent handoff & State required to resume a delegated task at a human judgement or authority boundary & Participant requires the proposed action, rationale, missing evidence, alternatives, material changes, and action receipt before checkout approval.\\
Interface-state divergence & Semantic difference between the visual interface and structured agent interface & Agent receives stale delivery evidence or a different seller identifier from the page shown to the customer.\\
Scoped outcome learning & Future update limited to the responsible product, seller, evidence, task, or decision state & A poor batch updates seller--batch confidence but not the customer's general aesthetic identity.\\
\bottomrule
\end{longtable}
\normalsize